\documentclass[12pt]{article}

\usepackage{amsmath}
\usepackage{amssymb}
\usepackage{amsthm}
\usepackage{geometry}
\usepackage{setspace}
\usepackage{hyperref}
\usepackage{graphicx}
\usepackage{booktabs}
\usepackage{natbib}

\newtheorem{proposition}{Proposition}

\title{Cohen's $f$ or Mean Standardized Differences? Assessing Covariate
Balance with Multivalued Treatments}

{\small
\author{Ariel Linden, DrPH\\
University of California, San Francisco\\
Department of Medicine\\
Division of Clinical Informatics \& Digital Transformation (DoC-IT)\\
San Francisco, CA, USA\\
ariel.linden@ucsf.edu}
}

\date{}
\begin{document}

\maketitle

\section*{Abstract}

Assessing covariate balance across more than two treatment groups has no
established omnibus standard: the prevailing practice averages, or takes
the maximum of, pairwise standardized mean differences (SMD), while
Cohen's $f$ --- the classical generalization of Cohen's $d$ to $k>2$
groups --- offers an alternative grounded in an established effect-size
framework, but the two have not been formally compared. We extend both
to arbitrary weighted and
covariate-adjusted models via a community-contributed Stata command
\texttt{esizereg}, and validate them in a simulation study of
$k\in\{3,4,6\}$ treatment groups under correctly specified and
misspecified generalized-propensity-score weighting, correlating each
statistic against downstream treatment-effect bias. Cohen's $f$ tracks
estimation bias comparably to mean absolute SMD, both pooled ($r=0.93$)
and within weighting arm ($r\approx0.79$); maximum absolute SMD is
generally weakest. This
ranking was unchanged under a deliberately adversarial
nonlinear/interaction outcome model. Cohen's $f$ and
the SMD-based statistics are not
numerically comparable: we derive an exact representation of
$f$ as a size-weighted quadratic function of the pairwise SMDs and prove
the minimum attainable ratio between $f$ and mean absolute SMD under equal
group weights, so conventional SMD thresholds should not apply
directly to $f$. We recommend reporting $f$ alongside its
per-level and pairwise decomposition.

\section*{Keywords}

covariate balance; effect size; Cohen's $f$; standardized mean difference;
multivalued treatments; generalized propensity score; simulation study

\section{Introduction}

Covariate balance is a necessary step in supporting a causal interpretation
of an estimated treatment effect, achieved by design, as in randomized
experiments, or by adjustment, as in propensity-score weighting
\citep{robins2000}, stratification \citep{linden2014}, entropy balancing
\citep{Hainmueller2012}, or matching \citep{lindensamuels2013}. Because only
observed, measured characteristics can be checked directly, demonstrating
balance on them shows that the implemented adjustment procedure has
balanced the variables it was given; it does not, by itself, provide direct
evidence about balance of unmeasured confounders, on which any causal
interpretation of the estimated effect also depends \citep{rubin2008}.

For a continuous covariate compared between two treatment groups, the
SMD is the most widely used tool for assessing
balance: unlike a two-sample $t$-test, whose significance is confounded with
sample size, the SMD reports a scale-free measure of imbalance magnitude,
and its use in place of hypothesis testing is now standard in the
propensity-score literature \citep{austin2009,imaikingstuart2008}. 

When more than two treatment groups are compared, no comparably grounded
tool exists. The prevailing practice, established by \citet{mccaffrey2013}
for propensity-score model selection with multiple treatments, is to compute
the SMD for every pairwise combination of groups and report the mean or
maximum across pairs; \citet{zhaoyang2022} refine this into an
outcome-weighted balance measure for the same model-selection purpose. Both
are useful for what they are built to do, but neither is derived from, or
shown to reduce to, Cohen's \citep{cohen1988} $d$ itself: a practitioner
accustomed to interpreting a two-group SMD of $0.1$ as negligible imbalance
\citep{austin2009} has no principled way to carry that same scale into a
comparison of three or more groups, and validating a proposed measure
against actual downstream estimation bias, rather than an abstract
imbalance target alone, is an established practice for the two-group case
\citep{franklin2014,belitser2011} that has no counterpart here.

This paper takes up that gap as an open, comparative question rather than
assuming either existing candidate is correct: is the mean or maximum
absolute pairwise SMD, the prevailing aggregation practice, or Cohen's $f$
\citep{cohen1988} --- the existing generalization of $d$ to more than two
groups, originally defined for the one-way analysis of variance as the
root-mean-square of each group's standardized deviation from the grand
mean --- the better-suited omnibus statistic for assessing balance with
multivalued treatments? Unlike the existing aggregation practice, $f$ carries
an exact, derivable relationship to $d$ itself: at $k=2$, $f=d/2$ exactly
for balanced designs, not an asymptotic approximation, because $d$ and $f$
are built from the same underlying group means and standard deviation ---
the same backward compatibility that motivates the $k$-sample distributional
tests described in \citet{linden2026ksample}. Whether this theoretical
grounding translates into better practical performance is an empirical
question we address directly. We extend $f$ to arbitrary weighted or
covariate-adjusted models together with a per-level decomposition (each
group's own standardized deviation from the pooled mean, in $d$ units) and
pairwise comparisons among all groups (in $d$ or Hedges' $g$), computed from
a single fitted model rather than assembled from raw group means and
standard deviations. The methods are implemented in the community-contributed Stata command
\texttt{esizereg} \citep{linden2019esizereg}.

We address the comparative question empirically in a simulation study
following the bias-correlation standard of \citet{franklin2014} and
\citet{belitser2011}, correlating both $f$ and \citet{mccaffrey2013}'s mean
and maximum absolute SMD directly against the bias of a downstream outcome
estimate under both correctly specified and misspecified
propensity-score weighting.

The paper's contributions fall into three distinct categories. The
mathematical contribution is the exact relationship between $f$ and the
pairwise standardized differences underlying $\overline{|d|}$ and
$|d|_{\max}$ (Section~2.6), including the conditions under which the two
families of statistics can and cannot be expected to agree numerically.
The computational contribution is extending $f$'s calculation to fitted,
weighted, and covariate-adjusted models via \texttt{esizereg}, rather than
computing it only for the simple one-way case. The empirical contribution
is the simulation and applied-example evidence that $f$ and the aggregate
pairwise-SMD practice it is compared against track downstream bias
similarly, differing systematically in scale and in the size-weighting of
imbalance rather than in general diagnostic performance.

\section{Methods}

\subsection{Notation and General Framework}

The statistics below are computed in two steps: a regression model for the
covariate is fit first, with the treatment indicator as a predictor and, if
desired, additional covariates, interactions, or weights; the effect sizes
are then obtained from that fitted model, rather than computed directly
from the raw data.

Consider a continuous covariate $X$ and a treatment variable with $k\ge2$
levels, denoted Groups $1,\ldots,k$, with $n_j$ observations in Group $j$
and $N=\sum_j n_j$. Let $M_j$ denote Group $j$'s covariate mean, obtained as
the adjusted predictive margin for level $j$ from the fitted model, rather
than the raw sample mean, so that $M_j$ correctly reflects any adjustment or
weighting in that model. The pooled mean is $\bar M=\sum_j (n_j/N)M_j$,
playing the role in the statistics below that the pooled empirical
cumulative distribution function plays in the distributional-testing
framework of \citet{linden2026ksample}. The
standardizing SD, $SD_p$, is likewise obtained from the fitted model rather
than computed separately from the raw data.

\subsection{Cohen's $d$: The Two-Group Case}

For $k=2$ groups, Cohen's \citep{cohen1988} $d$ is the standardized mean
difference,
\begin{equation}
d \;=\; \frac{M_1-M_0}{SD_p}.
\end{equation}

The numerator, $M_1-M_0$, is a predictive-margins contrast rather than a
difference of raw sample means, so that it reflects the fitted model's
covariate adjustment, any interaction involving the treatment indicator,
and any weighting exactly as specified. A raw difference of group means
would be systematically wrong whenever the treatment interacts with another
covariate in the model, since it would not correctly average the
interaction's effect over the covariate's observed distribution.

The denominator, $SD_p$, is likewise obtained from the fitted model rather
than from the raw group standard deviations: the delta-method standard
error of the model's overall predictive margin --- the margin computed with
no group indicator at all --- multiplied by $\sqrt N$. Because the leverage
of a linear model evaluated at the mean of its own covariates is exactly
$1/N$, this recovers the model's residual standard deviation exactly for
ordinary least squares, and an analogous model-based quantity for the other
supported model types, without requiring a separate, model-specific
residual-variance extraction rule for each.

Sampling weights, including those used for generalized-propensity-score
adjustment, are incorporated the same way: $M_j$ and $SD_p$ are computed
from whichever model was fit, with no change to the formulas themselves,
since the fitted model's predictive margins already reflect any weights
used in estimation.

\subsection{Cohen's $f$: The $k$-Sample Extension}

Cohen's \citep{cohen1988} $f$ generalizes $d$ to $k>2$ groups as the
weighted root-mean-square of each group's own standardized deviation from
the pooled mean,
\begin{equation}
f \;=\; \sqrt{\sum_{j=1}^{k} w_j\, d_j^2}, \qquad w_j=\frac{n_j}{N}, \qquad
d_j = \frac{M_j-\bar M}{SD_p},
\end{equation}
with $d_j$ defined and estimated exactly as in Section~2.4 below. This
follows Cohen's \citep{cohen1988} population-level definition of $f$,
extended here using adjusted margins rather than raw group means so that
$f$ is available for weighted and covariate-adjusted models, not only the
simple one-way case; the sample- and model-based standardizer $SD_p$ used
throughout this paper produces the finite-sample relationship to the
conventional ANOVA estimator noted below, rather than reproducing that
estimator as a special case.

\textbf{Backward compatibility at $k=2$.} At $k=2$, $\bar M = w_1 M_1+w_2M_2$,
so $M_1-\bar M = w_2(M_1-M_0)$ and $M_0-\bar M=-w_1(M_1-M_0)$, giving
$d_1=w_2 d$ and $d_0=-w_1 d$ in terms of the two-group $d$ of Section~2.2.
Substituting into the definition of $f$,
\begin{equation}
f^2 \;=\; w_1 w_2^2 d^2 + w_2 w_1^2 d^2 \;=\; w_1 w_2 (w_1+w_2)\,d^2 \;=\; w_1 w_2\, d^2,
\end{equation}
so that $f=\sqrt{w_1w_2}\,|d|$ exactly, and $f=d/2$ exactly for balanced
groups ($w_1=w_2=1/2$) --- not an asymptotic approximation, but the
identical relationship for any finite sample, because $d$ and $f$ are built
from the same $M_j$'s and the same $SD_p$. This exact relationship provides a direct link between the familiar two-group $d$ and the multigroup $f$, rather than introducing an unrelated scale.
(This $f$ relates to the classical ANOVA definition,
$f^2_{\text{ANOVA}}=SS_{\text{between}}/SS_{\text{residual}}$, by
$f^2 = f^2_{\text{ANOVA}}\times(df_{\text{residual}}/N)$, where
$df_{\text{residual}}=N-k$ even in the simple one-way case with no other
covariates. The two therefore do not coincide exactly at any finite $N$,
only asymptotically as $N$ grows large relative to $k$, since $SD_p$ here
reflects the model's estimated residual variance rather than
$SS_{\text{residual}}$ alone. This does not affect the $f=d/2$ identity
above, which holds exactly for this paper's $f$ and $d$ against each
other, using the same $SD_p$ throughout, independent of this separate
distinction from the classical ANOVA parameterization.)

\subsection{Per-Level Decomposition}

Each $d_j = (M_j-\bar M)/SD_p$ reports the standardized deviation of level
$j$'s adjusted mean from the weighted grand mean --- not a two-group
Cohen's $d$ comparing Group $j$ against the remaining observations pooled
together, a distinct contrast. Because $\bar M$ itself includes $M_j$ as
one of its components, $M_j-\bar M=(1-w_j)(M_j-M_{-j})$, where $M_{-j}$ is
the weighted mean of the other $k-1$ groups, so $d_j$ equals $(1-w_j)$
times the group-versus-rest contrast rather than that contrast itself
(the same relationship already visible in the $k=2$ derivation above,
where $d_1=w_2d$ rather than $d_1=d$). This decomposition reports, for
every group, the same standardized units contributing to the omnibus $f$
(Section~2.3: $f^2=\sum_j w_j d_j^2$), so that an investigator can identify
which specific group is driving a large $f$ rather than observing only the
omnibus value.

Each $d_j$ is reported as a point estimate only. Because $M_j$ contributes
to $\bar M$ itself, $M_j$ and $\bar M$ are not independent, so the
noncentral-$t$ construction used for the two-group $d$ of Section~2.2 does
not directly apply to this contrast; rather than report an approximate
interval, we report $d_j$ without a confidence interval. This does not
affect the pairwise comparisons of Section~2.5, which compare two distinct
groups and retain the two-group contrast structure of Section~2.2.

\subsection{Pairwise Comparisons}

Because $SD_p$ is common across all $k$ levels, the standardized difference
between any two groups $j$ and $j'$ reduces to the two-group $d$ (or
Hedges' $g$) of Section~2.2 applied directly to $M_j-M_{j'}$, giving
$\binom{k}{2}$ pairwise comparisons, each with its own exact noncentral-$t$
confidence interval, sharing the single pooled $SD_p$ across every
comparison rather than re-estimating it pair by pair --- the same
single-MSE logic underlying standard ANOVA post-hoc procedures.

\subsection{Benchmark: Mean and Maximum Absolute Standardized Bias}

We compare $f$ against the aggregate standardized-bias practice of
\citet{mccaffrey2013}, computed on the identical fitted model and the
identical $\binom{k}{2}$ pairwise $d_{jj'}$ values as Section~2.5:
\begin{equation}
\overline{|d|} \;=\; \binom{k}{2}^{-1}\!\!\sum_{j<j'} |d_{jj'}|,
\qquad
|d|_{\max} \;=\; \max_{j<j'} |d_{jj'}|.
\end{equation}
Because both are computed from the same $M_j$'s and $SD_p$ underlying $f$
itself, on the same simulated replicate, the comparison in Section~3 is
paired rather than assembled from separately estimated quantities.

\textbf{Relationship between $f$ and $\overline{|d|}$.} Because each $d_j$
is a deviation from the pooled mean, $\sum_j w_j d_j=0$, so
$f^2=\sum_j w_j d_j^2$ is exactly the weighted variance of the $d_j$'s,
which can be rewritten in terms of the pairwise contrasts as
$f^2=\sum_{j<j'} w_j w_{j'} d_{jj'}^2$: a size-weighted, quadratic
combination of the same pairwise $d_{jj'}$'s that $\overline{|d|}$ averages
linearly and unweighted. Because $d_1,\ldots,d_k$ arise from $k$ scalar
group means, the pairwise contrasts cannot all be equal for $k>2$ except in
the degenerate zero case --- for sorted group means, the largest pairwise
contrast equals the sum of the others --- so no configuration corresponds
to uniform pairwise imbalance, and the ratio $f/\overline{|d|}$ has no
attainable value there. Informally: for a fixed spread of the group means
around the grand mean --- fixed $f$ --- the mean absolute pairwise
separation, $\overline{|d|}$, is maximized when the ordered group means
are equally spaced, which is exactly the configuration that minimizes
$f/\overline{|d|}$.

\begin{proposition}
Under equal group weights ($w_j=1/k$ for all $j$), the ratio
$f/\overline{|d|}$ attains its minimum over all attainable configurations
of $k$ scalar group means at equally spaced means, where
\begin{equation}
\frac{f}{\overline{|d|}} \;=\; \sqrt{\frac{3(k-1)}{4(k+1)}}.
\end{equation}
\end{proposition}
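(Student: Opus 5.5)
Write $x_j = M_j/SD_p$, so that $d_{jj'} = x_j - x_{j'}$ and $d_j = x_j - \bar x$ with $\bar x = k^{-1}\sum_j x_j$. The plan is to bound the numerator and denominator of $f/\overline{|d|}$ separately in terms of the centered vector $(x_j-\bar x)_j$. Then one application of the Cauchy--Schwarz inequality gives the minimum ratio, together with the configuration that attains it. Throughout we exclude the degenerate case $x_1=\cdots=x_k$, where both statistics vanish and the ratio is undefined.

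\textbf{Step 1 (numerator).} Under $w_j=1/k$, the definition in Section~2.3 gives $f^2 = k^{-1}\sum_j (x_j-\bar x)^2$. Equivalently,
\[
\|x-\bar x\mathbf 1\|_2 = \sqrt{k}\, f .
\]
The pairwise identity $f^2=\sum_{j<j'} w_jw_{j'}d_{jj'}^2$ stated before the proposition is consistent with this, but we do not need it.

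\textbf{Step 2 (denominator as a linear form).} Sort the means as $x_{(1)}\le\cdots\le x_{(k)}$. Each $x_{(i)}$ enters $\sum_{j<j'}|x_j-x_{j'}|$ with coefficient $+1$ exactly $i-1$ times and $-1$ exactly $k-i$ times. Hence
\[
\sum_{j<j'}|d_{jj'}| = \sum_{i=1}^k c_i\,x_{(i)}, \qquad c_i = 2i-k-1 .
\]
Since $\sum_i c_i = 0$, we may replace $x_{(i)}$ by $x_{(i)}-\bar x$. Applying Cauchy--Schwarz then gives
\[
\sum_{j<j'}|d_{jj'}| \le \|c\|_2\,\|x-\bar x\mathbf 1\|_2 = \|c\|_2\sqrt{k}\,f .
\]
A routine sum of squares gives $\|c\|_2^2=\sum_{i}(2i-k-1)^2 = k(k^2-1)/3$. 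Dividing by $\binom{k}{2}$ yields
\[
\overline{|d|} \le \frac{2}{k(k-1)}\cdot k f\sqrt{\tfrac{k^2-1}{3}} = 2f\sqrt{\frac{k+1}{3(k-1)}} .
\]
Rearranging, $f/\overline{|d|} \ge \sqrt{3(k-1)/(4(k+1))}$.

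\textbf{Step 3 (attainment).} Equality in Cauchy--Schwarz holds iff $x_{(i)}-\bar x = \lambda c_i$ for some $\lambda>0$. Because $c_i$ is strictly increasing in $i$, this is consistent with the sorted order, so it describes a genuine configuration: equally spaced means with common gap $2\lambda$. Conversely, any equally spaced configuration has this form after centering. So the bound is attained precisely at equally spaced means. As a check, $k=2$ gives $1/2$, matching $f=|d|/2$ from Section~2.3.

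\textbf{Main obstacle.} The only delicate point is Step 2. The absolute values make $\overline{|d|}$ non-linear, so it must be rewritten as a linear form with order-dependent coefficients $c_i$. One then has to confirm that the Cauchy--Schwarz extremizer respects the ordering that was used to define those coefficients; otherwise the bound might not be attainable. Monotonicity of $c_i$ settles this. Since the linear form was derived for the sorted vector and $f$ is permutation invariant, no generality is lost.
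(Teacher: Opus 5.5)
Your proof is correct and follows the paper's argument essentially step for step. Like the paper, you write the sum of absolute pairwise differences as the linear form $\sum_i (2i-k-1)x_{(i)}$ on the sorted means, apply Cauchy--Schwarz against $\sum_i (x_i-\bar x)^2 = kf^2$ using $\sum_i (2i-k-1)^2 = k(k^2-1)/3$, and identify equally spaced means as the equality case. The only differences are cosmetic: you handle centering through $\sum_i c_i = 0$ rather than assuming $\bar M = 0$ at the outset, and you state explicitly that the proportionality constant is positive and consistent with the sorted order, which the paper leaves implicit.
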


\renewcommand{\qedsymbol}{}
\begin{proof}
Center the group means without loss of generality, so $\bar M=0$ and
$d_j=x_j/SD_p$ for $x_j=M_j-\bar M$; since the ratio is scale-invariant,
set $SD_p=1$. Then $f^2=(1/k)\sum_i x_i^2$ and, ordering
$x_1\le\cdots\le x_k$, $\overline{|d|}=\binom{k}{2}^{-1}\sum_{i<j}(x_j-x_i)$.
Each $x_i$ appears as the larger element in $i-1$ pairs and the smaller
element in $k-i$ pairs, so
\begin{equation}
\sum_{i<j}(x_j-x_i) \;=\; \sum_{i=1}^{k}\bigl[(i-1)-(k-i)\bigr]x_i
\;=\; \sum_{i=1}^{k}(2i-k-1)\,x_i.
\end{equation}
Minimizing $f/\overline{|d|}$ for fixed $f$ is equivalent to maximizing
this sum subject to $\sum_i x_i^2$ fixed. By the Cauchy--Schwarz
inequality, $\sum_i (2i-k-1)x_i \le \|a\|\,\|x\|$, with equality if and
only if $x$ is proportional to $a=(2i-k-1)_{i=1}^k$ --- an equally spaced
sequence, and therefore attainable by real scalar group means. Using
$\sum_{i=1}^k(2i-k-1)^2=k(k^2-1)/3$ and $\sum_i x_i^2=kf^2$, the maximum
attainable value of $\overline{|d|}$ for fixed $f$ --- denoted
$\overline{|d|}^{*}$ to avoid confusion with $|d|_{\max}$, the maximum
\emph{pairwise} SMD used elsewhere in this paper is
\begin{equation}
\overline{|d|}^{*} \;=\; \binom{k}{2}^{-1}\sqrt{\tfrac{k(k^2-1)}{3}}\sqrt{k}\,f
\;=\; \frac{2f}{k-1}\sqrt{\tfrac{k^2-1}{3}},
\end{equation}
so that $f/\overline{|d|}$ is minimized at $\sqrt{3(k-1)/4(k+1)}$, attained
at equally spaced group means.\footnotemark

\footnotetext{This proof, and the closed form
itself, apply to equal group weights specifically; no corresponding
closed-form minimum is derived here for unequal weights, where Table~3's
equal-weight reference values are reported for orientation, not as
bounds. The result was also confirmed by direct search over $2\times10^5$
randomly generated equal-weight group-mean configurations at each of
$k=3,4,6$, matching this closed form to four decimal places in every
case.}\end{proof}
\vspace{-\baselineskip}
This value is substantially larger than $\sqrt{(k-1)/(2k)}$, which would
follow from naively treating the pairwise contrasts as an unconstrained
vector --- a configuration not attainable for scalar group means, and not
a meaningful reference. Departures from the equally-spaced value reflect
two further, additive sources: dispersion of the $d_{jj'}$'s across pairs
beyond that already required by equal spacing, and, under unequal group
sizes, whether the largest pairwise imbalances happen to involve the
numerically larger or smaller groups. Both are examined empirically in
Section~3.

More generally, $\overline{|d|}$, $|d|_{\max}$, and $f$ can be viewed as
different norms of the same vector of pairwise standardized differences:
$\overline{|d|}$ is proportional to its $L_1$ norm, $|d|_{\max}$ is
exactly its $L_\infty$ norm, and $f$ is a size-weighted $L_2$ norm. This
perspective clarifies why the three need not agree: $|d|_{\max}$ responds
only to the single worst pair, $\overline{|d|}$ weights every pair equally
regardless of magnitude, and $f$ sits between the two --- more influenced
by large deviations than $\overline{|d|}$'s linear average, but, unlike
$|d|_{\max}$, still responsive to the overall pattern across pairs rather
than one pair alone, subject to the group-size weights of
Proposition~1.

\subsection{Simulation Study Design}

Table~1 summarizes the full design; the subsections below describe each
element in detail.

\subsubsection{Data-Generating Process}

Three covariates are generated independently for each of $N$ observations:
$X_1\sim N(0,1)$, $X_2\sim\text{lognormal}(0,0.5)$, and
$X_3\sim\text{Bernoulli}(0.4)$, chosen to include both a normal and a
right-skewed continuous covariate rather than testing normal covariates
only. Treatment assignment among $k$ levels follows a multinomial-logit
model with group-specific coefficients on $X_1$, $X_2$, $X_3$, and an
$X_1{\times}X_3$ interaction, scaled by a confounding-strength parameter
$\gamma\in\{0,0.25,0.5,1\}$: $\gamma=0$ gives random assignment (no true
confounding), and larger $\gamma$ produces progressively stronger,
group-specific confounding.

\subsubsection{Group Structure}

Two balance scenarios are crossed with every $\gamma$ level: \emph{equal},
in which all $k$ groups share the same baseline assignment probability, and
\emph{unequal}, in which group-specific intercepts produce a skewed
baseline split loosely mimicking realistic multivalued-treatment group
sizes. This gives $4\times2=8$ cells per $(k,N)$ combination.

\subsubsection{Weighting}

For each replicate, a generalized propensity score is estimated by
multinomial logistic regression under two specifications: \emph{correct},
including the $X_1{\times}X_3$ interaction actually present in the
assignment model, and \emph{misspecified}, omitting it. Each replicate is
therefore evaluated under three weighting arms on the identical simulated
data: unweighted, correctly weighted, and misspecified-weighted, allowing a
fully paired comparison of $f$'s behavior across weighting quality.

\subsubsection{Outcome Model and Bias Target}

Following the outcome-model structure and confounder-coefficient magnitude
of \citet{zhaoyang2022}, $Y = \mu_T + 1.5(X_1+X_2+X_3) + \varepsilon$,
$\varepsilon\sim N(0,1)$, with true group means $\mu_j$ spaced $1.5$ apart
($\mu_j = 1.5(j-1)$), matching \citeauthor{zhaoyang2022}'s theta-intercept
spacing at $k=3$ and extended to $k=4,6$ by preserving the same increment.
For each weighting arm, group means are re-estimated from an unadjusted
outcome regression of $Y$ on the treatment indicator, and compared against
$\mu_j$ corrected for the covariates' sample-mean contribution implicit in
that unadjusted regression, $\mu_j + 1.5(\bar X_1+\bar X_2+\bar X_3)$; mean
and maximum absolute bias across the $k$ levels serve as the actual
downstream estimation-bias target against which $f$ and the benchmark of
Section~2.6 are validated, following \citet{franklin2014} and
\citet{belitser2011}.

\subsubsection{Common Design Elements}

$k\in\{3,4,6\}$ crossed with $N\in\{200,500,2000,10000\}$, each of the 12
$(k,N)$ combinations replicated 1000 times per cell across the 8
$\gamma\times$balance cells, for 96{,}000 replicates in total. All analyses
were conducted in Stata (version 18) using the community-contributed Stata
command \texttt{esizereg} \citep{linden2019esizereg}.

\subsubsection{Nonlinear/Interaction Robustness Scenario}

To test whether the bias-tracking comparison above depends on the outcome
model's alignment with what these diagnostics measure, every element of
the design above --- the covariates, the treatment-assignment mechanism,
the correctly specified and misspecified GPS models, and the $\gamma$ and
balance scenarios --- was held identical, with only the outcome model
replaced by
\begin{equation}
Y = \mu_T + 1.5(X_1+X_2+X_3) + 1.0X_1^2 + 1.0X_1X_3 + \varepsilon,
\qquad \varepsilon\sim N(0,1),
\end{equation}
adding a quadratic term in $X_1$ and an $X_1{\times}X_3$ interaction to the
linear outcome of Section~2.7.3, structure not directly represented by
first-moment balance diagnostics. This scenario was run for
$k\in\{3,4,6\}$ crossed with $N\in\{500,2000\}$ (the two middle values of
the main design), each of the 6 $(k,N)$ combinations replicated 1000 times
per cell across the same 8 $\gamma\times$balance cells, for 48{,}000
replicates (432{,}000 rows across covariates and weighting arms) in
total. The bias target was corrected using the same within-replicate
sample-mean approach as Section~2.7.3, extended to the added terms:
$\mu_j + 1.5(\bar X_1+\bar X_2+\bar X_3) + 1.0\overline{X_1^2} +
1.0\overline{X_1X_3}$.

\section{Results}

Table~2 shows $f$ rising monotonically with $\gamma$ under unweighted
assignment (0.062 to 0.209), while
under both correctly specified and misspecified GPS weighting it remains an
order of magnitude smaller throughout. Correctly specified weighting shows
slightly smaller $f$ than misspecified weighting at every $\gamma$ level,
though the gap is modest; both weighting arms produce comparable downstream
bias (mean absolute bias 0.087 vs.\ 0.086, pooled), indicating the mild
interaction-omission misspecification tested here is detectable in balance
before it meaningfully affects estimation.

$f$ tracks actual bias comparably well overall to $\overline{|d|}$ and
$|d|_{\max}$ --- \citet{mccaffrey2013}'s aggregation practice, computed
alongside $f$ on the same replicates --- but not identically: pooled
correlation with mean absolute bias is 0.933 for $f$, 0.936 for
$\overline{|d|}$, and 0.919 for $|d|_{\max}$; within the correctly
specified weighting arm specifically, 0.791, 0.788, and 0.776; within the
misspecified arm, 0.785, 0.784, and 0.772; and within the unweighted arm,
0.937, 0.945, and 0.907.
$|d|_{\max}$ is the weakest bias-tracker of the three in every arm tested,
consistent with a maximum-based statistic carrying more sampling variance
than a mean- or root-mean-square-based one across replicates, even though
it is, by construction, the most directly targeted statistic for detecting
a single severely imbalanced pair specifically (Section~5.3). $f$ and
$\overline{|d|}$ remain closely matched throughout, and neither dominates
the other as a general bias-tracking diagnostic in this design.

The two statistics are not, however, on the same numeric scale. Table~3
reports $f/\overline{|d|}$ across $k$, balance, and $\gamma$, consistent with
the relationship derived in Section~2.6: the ratio rises with $k$ (0.63 at
$k=3$ to 0.77 at $k=6$), tracking the equally-spaced reference value from
that same equation, which rises from 0.612 to 0.732 across the same range,
and the equal/unequal gap at $k=6$ shifts from $-0.044$ at $\gamma=0$ to
$+0.007$ at $\gamma=1$, consistent with the relationship between
group-size weighting and the location of the largest pairwise imbalances
described in Section~2.6.

The nonlinear/interaction robustness scenario of Section~2.7.6 tests
whether the comparative bias-tracking conclusion above depends on the
outcome model's alignment with what these diagnostics measure, using a
deliberately adversarial outcome model with structure not directly
represented by first-moment balance diagnostics. The
relative ranking among the three diagnostics was unchanged: $f$ correlated
with actual bias at least as strongly as $\overline{|d|}$ in every
weighting arm and every $k$ tested (e.g., correct GPS: 0.688 vs.\ 0.671),
both consistently ahead of $|d|_{\max}$ (0.662). Absolute correlations
declined modestly under weighting specifically, on the identical $N$
range in both arms (Table~4) --- correct GPS fell
from 0.728 under the linear outcome to 0.688 under the nonlinear one, and
misspecified GPS from 0.725 to 0.687 --- while the unweighted arm was
essentially unchanged (0.953 to 0.946). This pattern is consistent with
weighting removing most of the linear mean-imbalance component of
confounding, leaving a larger relative share of the remaining bias
attributable to the nonlinear and interaction structure that these
diagnostics do not directly track.

\section{Applied Example}

\subsection{Study Context and Data}

We illustrate $f$ and $\overline{|d|}$ using disease management (DM)
program data for congestive heart failure patients, previously used to
illustrate multivalued-treatment adjustment and balance-assessment methods
\citep{linden2014,linden2016multivalued,linden2026ksample}. The program,
implemented in a large health plan in the western United States, offered
enrolled patients one of two interventions based on a program nurse's
subjective assessment of patient needs and preferences: periodic telephone
calls from a nurse to discuss self-management behaviors, or remote
tele-monitoring (RTM) involving daily electronic transmission of
disease-related symptoms with nurse follow-up when indicated
\citep{lindenblackbox2006}. Health plan members with the condition who did
not enroll served as a non-participant comparison group. This subset of data consists of 5141 non-participants (Control), 459 telephonic participants (Calls), and 307 RTM participants ($N=5907$), each with 12 months of
pre-intervention pharmacy utilization data.

\subsection{Covariate Balance Assessment}

Following \citet{linden2014} and \citet{linden2016multivalued}, marginal
mean weights through stratification (MMWS) were constructed using the
generalized approach for nominal treatments described in
\citet{linden2014}: a generalized propensity score (GPS) for each of the three
treatment levels was estimated by multinomial logistic regression. We
apply $f$ and $\overline{|d|}$ to a continuous predictive model score in which each patient's predicted future total health care costs are estimated from prior pharmacy claims data \citep{powers2005}.

\subsection{Results}

Table~5 reports both statistics before and after weighting. Before
weighting, both indicated clear imbalance in the predictive model score. The omnibus $f=0.065$ (equivalent $\eta^2=0.004$), and none of the three
pairwise 95\% confidence intervals crossed zero, most substantially for
Control vs.\ RTM ($d=-0.201$), followed closely by Control vs.\ Calls
($d=-0.189$); Calls vs.\ RTM was comparatively small ($d=-0.012$).
Computed from the same three pairwise contrasts, the McCaffrey benchmark
was $\overline{|d|}=0.134$ and $|d|_{\max}=0.201$ --- both above the
balance-diagnostic convention of $0.10$ for negligible SMD imbalance
\citep{austin2009}. The largest imbalance (Control vs.\ RTM) involves the
majority Control group rather than the two minority arms
($f/\overline{|d|}\approx0.49$), illustrating how pairwise magnitude and
group-size weighting jointly determine the relationship between $f$ and
mean absolute SMD (Section~2.6), discussed further, with the contrasting
case, in Section~5.3.

After MMWS weighting, this picture reversed for both statistics. The
omnibus $f$ fell to $0.010$ (equivalent $\eta^2=0.0001$), and none of the
three pairwise comparisons excluded zero. The McCaffrey benchmark showed
the same reversal: $\overline{|d|}=0.038$ and $|d|_{\max}=0.057$ after
weighting, both comfortably below the $0.10$ convention. $f$ and
$\overline{|d|}$ agree on the substantive conclusion despite their
differing scales: unambiguous baseline imbalance resolved by weighting
into a covariate with no detectable residual imbalance.

\section{Discussion}

\subsection{Principal Findings}

Four findings emerge. First, $f$ behaved as expected of a balance
diagnostic across the simulated conditions: it rises
with confounding strength under unweighted assignment and separates
correctly specified from misspecified weighting at every $\gamma$ level
tested, though the separation is modest, and the two weighting arms
produced very similar downstream bias (Table~2), suggesting the
interaction-omission misspecification tested here is detectable in balance
before it meaningfully affects estimation. Second, $f$ tracks actual bias
comparably to \citet{mccaffrey2013}'s $\overline{|d|}$, both pooled and
within weighting arm, while $|d|_{\max}$ is consistently the weakest
bias-tracker of the three; neither $f$ nor $\overline{|d|}$ dominates the
other. Third, $f$ and $\overline{|d|}$ are not on the same numeric scale:
their ratio follows an exact, derivable relationship (Section~2.6) rather
than incidental estimation noise. Fourth, this comparative ranking is not
an artifact of testing only a linear outcome model aligned with what these
diagnostics measure: under the nonlinear/interaction robustness scenario
(Section~2.7.6), $f$ and $\overline{|d|}$ remained closely matched and
both ahead of $|d|_{\max}$ in every weighting arm and every $k$ tested,
though all three diagnostics' absolute correlation with bias declined
under weighting specifically, as the nonlinear structure they cannot
directly track came to account for a larger share of the remaining bias.

\subsection{Why the Bias Correlation Is Not Higher}

Within a single weighting arm, $f$'s correlation with actual bias (0.79 for
correctly specified weighting) is positive and precisely estimated at this
sample size, but well below 1. This is expected rather than a shortcoming.
Bias in the outcome estimate reflects the entire data-generating process
--- sampling variability in the outcome regression, the specific realized
covariate draws, and the residual noise term --- of which covariate
imbalance is only one contributing input. A balance diagnostic, however
constructed, describes the treatment-assignment mechanism; it cannot, and
should not be expected to, perfectly predict an estimate built from an
additional, independent source of randomness. The finding to take from the within-arm correlations reported in Section~3
is not that 0.79 is a modest correlation, but that it is a strong
one for a diagnostic computed from an entirely separate model than the one
producing the estimate it is meant to anticipate. The nonlinear/interaction
robustness analysis (Section~2.7.6) reinforces this interpretation: once
weighting removed much of the mean imbalance, correlations fell to
approximately 0.69 because a larger share of downstream bias arose through
outcome structure not represented by mean-balance diagnostics.

\subsection{Practical Advice}

Four recommendations follow directly from the results. First, $f$ should be
reported alongside its per-level and pairwise decomposition
(Sections~2.4--2.5), not as a single omnibus number in isolation: a
root-mean-square statistic computed across $k$ groups is, by construction,
capable of masking a single severely imbalanced pair among several
well-balanced ones, the same concern raised for the $k$-sample
distributional tests in \citet{linden2026ksample}.
Second, $f$ and $\overline{|d|}$ should not be judged against the same
numeric threshold. An investigator accustomed to Austin's \citep{austin2009}
convention that an SMD near $0.1$ indicates negligible imbalance should not
apply that threshold directly to $f$: Table~3 shows $f$ running
systematically below $\overline{|d|}$, by a margin that is itself a
function of $k$ and group-size balance rather than a fixed constant.
Cohen's \citep{cohen1988} own small/medium/large benchmarks (0.10/0.25/0.40)
were developed for interpreting effect sizes in power-analysis and
hypothesis-testing contexts, a different inferential purpose from judging
whether residual imbalance after adjustment makes residual confounding
acceptably unlikely, and we do not treat those benchmarks as validated for
the latter use here. This paper does not establish a validated balance
threshold for $f$; an investigator wanting one should derive it directly
from the relationship between $f$ and downstream bias in a design resembling
their own, following the approach of Section~3, rather than import a
threshold calibrated for a different purpose. The direction of any
group-size-driven divergence between $f$ and $\overline{|d|}$ is not fixed
in general --- our own results show it changing sign as confounding
strength increases (Table~3) --- so no universal rule of thumb (e.g., ``$f$
understates imbalance under unequal groups'') should be assumed without
checking the specific design at hand.

Third, although the two statistics tracked bias comparably here, they are
not mechanically identical. Because $f$ combines the pairwise $d_{jj'}$'s
quadratically but also weights each pair by $w_jw_{j'}$ (Section~2.6), it
gives relatively greater influence than a linear average to large pairwise
imbalances, holding pair weights comparable. This advantage is not
unconditional, however: it can be offset, or reversed, when the largest
imbalance involves small treatment groups, for which $w_jw_{j'}$ is itself
small (point four, below).
$|d|_{\max}$ remains the statistic most directly targeted at detecting a
single worst pair, since it responds to that pair alone rather than any
weighted combination, but $|d|_{\max}$ was also the weakest bias-tracker of
the three statistics tested in every weighting arm (Section~3), consistent
with a maximum carrying more sampling variance across replicates than a
mean- or root-mean-square-based statistic. Whether $f$ or $|d|_{\max}$ is
preferable for detecting an isolated bad pair therefore depends on that
pair's group sizes as much as on the tradeoff between targeting and
sampling variability, and the choice of diagnostic should be reported
transparently \citep{lindenroberts2005}.

Fourth, $f$'s $n_j/N$ weighting is a substantive choice, not merely a
technical detail, and carries a consequence worth making explicit:
imbalance concentrated in a small treatment group contributes relatively
little to the omnibus statistic, while imbalance concentrated in a large
group contributes more. This may have an intuitive justification when the
target estimand weights groups in proportion to their representation in
the target population; observed treatment-arm proportions alone do not
automatically establish that this is the appropriate weighting, since the
target population and estimand, not merely the sample at hand, determine
how much relative importance each treatment contrast should carry. For an
estimand where every pairwise treatment contrast is of equal substantive
interest regardless of group size --- comparing several rare treatment
variants on equal footing, for instance --- $f$'s weighting may be
undesirable, since it can understate imbalance that matters as much to the
investigator as any other. Consider three treatment arms with
prevalences of $80\%$, $15\%$, and $5\%$ in a multivalued-treatment
study: an omnibus statistic that heavily downweights imbalance involving
the $5\%$ arm may be reassuring for an estimand targeting the overall
treated population, while masking a substantively important contrast
involving that less-common treatment. The applied example illustrates
the other side of this mechanism: with $87\%$ of observations in Control
and only $8\%$/$5\%$ in the Calls/RTM arms, the largest unweighted
pairwise SMD (Control vs.\ RTM, $d=-0.201$) involved the \emph{majority}
group rather than two minority arms ($f=0.065$, $\overline{|d|}=0.134$,
Table~5), illustrating how pairwise magnitude and group-size weighting
jointly determine the relationship between $f$ and mean absolute SMD
(Section~2.6). Had the largest
imbalance instead fallen between the two minority arms, as in the
weighting-driven divergence documented more generally in Table~3 and
Section~2.6, $f$ could have understated it substantially relative to
$\overline{|d|}$.

\subsection{Limitations}

Three limitations qualify these findings. First, the nonlinear/interaction
robustness scenario (Section~2.7.6) tested one outcome
structure, adapted from the linear model of \citet{zhaoyang2022}; the
comparative ranking among diagnostics was unchanged under it, but other
nonlinear forms, including \citeauthor{zhaoyang2022}'s own nonlinear
scenario for their balance measure, were not evaluated. Second, only one form of generalized-propensity-score
misspecification was tested --- omission of a single interaction term ---
and the resulting bias difference between correctly specified and
misspecified weighting was modest; a more severe misspecification, such as
omitting a confounder entirely, might separate the two weighting arms, and
the diagnostics that track them, more sharply. Third, $f$ itself carries no
closed-form confidence interval in this implementation. A point estimate alone is
enough to rank candidate models or weighting schemes against one another,
but not to judge how much confidence that ranking deserves: at smaller
$N$, or when $f$ sits close to whatever decision threshold an investigator
has adopted for their own application, a CI is what distinguishes
a value that is reliably below threshold from one that is merely a point
estimate whose sampling variability could as easily place it above.
The same applies when comparing $f$ across two candidate models or
weighting specifications directly, where an apparent difference in the
point estimates alone does not indicate whether that difference is
distinguishable from sampling noise. A noncentral-$F$ construction,
analogous to the noncentral-$t$ intervals used for the pairwise
comparisons, was developed and tested but found to be invalid under the
robust variance estimator that sampling weights typically require, and was
withdrawn rather than reported incorrectly. A bootstrap is recommended in
its place; users requiring a formal interval for the omnibus statistic
specifically should budget for the additional computation this entails.

\subsection{Conclusion}

Cohen's $f$ extends the standardized-mean-difference framework already
standard for two-group covariate balance to $k>2$ treatment groups, with an
exact, derivable relationship to both Cohen's $d$ and the existing
aggregate-SMD practice it is intended to formalize, implemented for
arbitrary weighted and covariate-adjusted models via \texttt{esizereg}. It
performs comparably to existing practice as a bias-tracking diagnostic
while offering a decomposition, and a connection to an established
effect-size framework, that pairwise-SMD aggregation does not. The applied
example illustrated both statistics reaching the same substantive
conclusion on real data despite differing in absolute scale: unambiguous
baseline imbalance resolved by weighting into a covariate with no
detectable residual imbalance. For investigators comparing three or more
treatments in observational or covariate-adjusted studies, $f$ offers a
theoretically grounded omnibus summary of balance, but its group-size
weighting means the per-level and pairwise diagnostics should be retained
and consulted directly whenever a contrast involving a less-common
treatment arm is substantively important, rather than relying on the
omnibus statistic alone.

\bibliographystyle{apalike}
\bibliography{refs}

\clearpage

\begin{table}[htbp]
\centering
\caption{Simulation Study Design}
\small
\begin{tabular}{@{}p{0.28\textwidth}p{0.60\textwidth}@{}}
\toprule
Design factor & Levels / description \\
\midrule
Treatment groups ($k$)      & 3, 4, 6 \\
Sample size ($N$)           & 200, 500, 2000, 10{,}000 \\
Confounding strength ($\gamma$) & 0 (none), 0.25, 0.5, 1 (strong) \\
Group-size balance          & Equal: all $k$ groups share the same baseline assignment probability. Unequal: group-specific intercepts produce a skewed baseline split. \\
Covariates                  & $X_1\sim N(0,1)$; $X_2\sim\text{lognormal}(0,0.5)$; $X_3\sim\text{Bernoulli}(0.4)$ \\
Treatment assignment        & Multinomial logit on $X_1,X_2,X_3$, an $X_1{\times}X_3$ interaction, and $\gamma$ \\
GPS weighting                & Unweighted; correctly specified (includes $X_1{\times}X_3$); misspecified (omits it) \\
Outcome model                & $Y=\mu_T+1.5(X_1+X_2+X_3)+\varepsilon$, $\varepsilon\sim N(0,1)$; true group means $\mu_j$ spaced 1.5 apart \citep{zhaoyang2022} \\
Benchmark                    & Mean and maximum absolute SMD across all $\binom{k}{2}$ pairs \citep{mccaffrey2013}, computed on the same replicate \\
\bottomrule
\end{tabular}

\vspace{4pt}
\parbox{0.9\textwidth}{\footnotesize Note: each of the 12 $(k,N)$ combinations was replicated 1000 times per cell across the 8 $\gamma\times$balance cells (96{,}000 replicates in total). Each replicate was evaluated under all three weighting arms on the identical simulated data, giving a fully paired comparison. All analyses were conducted in Stata (version 18) using the community-contributed Stata command \texttt{esizereg}.}
\end{table}

\clearpage

\begin{table}[htbp]
\centering
\caption{Cohen's $f$, $\overline{|d|}$, and $|d|_{\max}$ by weighting arm and confounding strength}
\small
\begin{tabular}{@{}llcccc@{}}
\toprule
Weighting arm & Statistic & $\gamma=0$ & $\gamma=0.25$ & $\gamma=0.5$ & $\gamma=1$ \\
\midrule
Unweighted & $f$ & 0.062 & 0.091 & 0.132 & 0.209 \\
 & $\overline{|d|}$ & 0.088 & 0.129 & 0.188 & 0.296 \\
 & $|d|_{\max}$ & 0.168 & 0.245 & 0.356 & 0.560 \\
\addlinespace
Correct GPS & $f$ & 0.010 & 0.013 & 0.018 & 0.031 \\
 & $\overline{|d|}$ & 0.014 & 0.018 & 0.025 & 0.043 \\
 & $|d|_{\max}$ & 0.030 & 0.036 & 0.051 & 0.086 \\
\addlinespace
Misspecified GPS & $f$ & 0.010 & 0.013 & 0.019 & 0.034 \\
 & $\overline{|d|}$ & 0.014 & 0.018 & 0.027 & 0.047 \\
 & $|d|_{\max}$ & 0.028 & 0.036 & 0.054 & 0.093 \\
\bottomrule
\end{tabular}
\vspace{4pt}
\parbox{0.9\textwidth}{\footnotesize Note: pooled across $k\in\{3,4,6\}$, $N\in\{200,500,2000,10{,}000\}$, and both balance scenarios; 96{,}000 replicates.}
\end{table}

\clearpage

\begin{table}[htbp]
\centering
\caption{Ratio of $f$ to $\overline{|d|}$ by group count, balance, and confounding strength}
\small
\begin{tabular}{@{}llcccc@{}}
\toprule
$k$ & Balance & $\gamma=0$ & $\gamma=0.25$ & $\gamma=0.5$ & $\gamma=1$ \\
\midrule
3 & Equal   & 0.641 & 0.637 & 0.632 & 0.625 \\
  & Unequal & 0.620 & 0.629 & 0.635 & 0.639 \\
\addlinespace
4 & Equal   & 0.707 & 0.703 & 0.699 & 0.697 \\
  & Unequal & 0.680 & 0.698 & 0.703 & 0.710 \\
\addlinespace
6 & Equal   & 0.772 & 0.771 & 0.772 & 0.785 \\
  & Unequal & 0.728 & 0.749 & 0.761 & 0.785 \\
\bottomrule
\end{tabular}
\vspace{4pt}
\parbox{0.9\textwidth}{\footnotesize Note: $f/\overline{|d|}$, pooled across $N$. Under equal group sizes, the minimum, attained at equally-spaced group means and proven in Section~2.6, is $\sqrt{3(k-1)/(4(k+1))}$ (0.612, 0.671, 0.732 at $k=3,4,6$). Departures reflect pairwise-SMD dispersion beyond the equally-spaced configuration and, under unequal group sizes, whether the largest pairwise imbalances involve numerically larger or smaller groups (Section~2.6).}
\end{table}

\clearpage

\begin{table}[htbp]
\centering
\caption{Correlation with actual estimation bias, linear vs.\ nonlinear/interaction outcome model, by weighting arm}
\small
\begin{tabular}{@{}llccc@{}}
\toprule
Weighting arm & Outcome model & $f$ & $\overline{|d|}$ & $|d|_{\max}$ \\
\midrule
Unweighted        & Linear    & 0.953 & 0.962 & 0.921 \\
Unweighted        & Nonlinear & 0.946 & 0.946 & 0.922 \\
\addlinespace
Correct GPS       & Linear    & 0.728 & 0.714 & 0.718 \\
Correct GPS       & Nonlinear & 0.688 & 0.671 & 0.662 \\
\addlinespace
Misspecified GPS  & Linear    & 0.725 & 0.715 & 0.717 \\
Misspecified GPS  & Nonlinear & 0.687 & 0.673 & 0.664 \\
\bottomrule
\end{tabular}
\vspace{4pt}
\parbox{0.9\textwidth}{\footnotesize Note: both columns restricted to the
identical design ($k\in\{3,4,6\}$, $N\in\{500,2000\}$, 1000 replicates per
cell, 48{,}000 replicates per outcome model); the linear-outcome figures
are the by-weighting-arm correlations from the full simulation (Table~2),
recomputed on this $N$ range specifically rather than pooled across all
four $N$ values tested there.}
\end{table}

\clearpage

\begin{table}[htbp]
\centering
\caption{Applied example: predictive model score balance before and after MMWS weighting}
\small
\begin{tabular}{@{}lcc@{}}
\toprule
 & Unweighted & MMWS-weighted \\
\midrule
Cohen's $f$ (omnibus)     & 0.065 & 0.010 \\
Equivalent $\eta^2$       & 0.004 & 0.0001 \\
$\overline{|d|}$ (benchmark) & 0.134 & 0.038 \\
$|d|_{\max}$ (benchmark)  & 0.201 & 0.057 \\
\addlinespace
\multicolumn{3}{l}{Pairwise $d$ [95\% CI]} \\
\quad Control vs.\ Calls  & $-0.189$ [$-0.285$, $-0.094$] & $-0.015$ [$-0.110$, $0.081$] \\
\quad Control vs.\ RTM    & $-0.201$ [$-0.316$, $-0.086$] & $\phantom{-}0.042$ [$-0.073$, $0.157$] \\
\quad Calls vs.\ RTM      & $-0.012$ [$-0.156$, $0.133$]  & $\phantom{-}0.057$ [$-0.088$, $0.201$] \\
\bottomrule
\end{tabular}
\vspace{4pt}
\parbox{0.9\textwidth}{\footnotesize Note: $N=5907$ (5141 Control, 459 Calls, 307 RTM). $\overline{|d|}$ and $|d|_{\max}$ computed from the same three pairwise $d$'s reported below.}
\end{table}

\end{document}